\journal{}
\begin{document}

\begin{frontmatter}



\title{Computing Strong Nash Equilibria for Multiplayer Games}


\author{No\'{e}mi Gask\'{o},   Rodica Ioana Lung, D. Dumitrescu\footnote{-All authors have equal contributions}}

\address{Babe\c s-Bolyai University, Cluj-Napoca, Romania}
\ead{\{gaskonomi, ddumitr\}@cs.ubbcluj.ro, rodica.lung@econ.ubbcluj.ro}

\begin{abstract}
A new method for computing strong Nash equilibria in multiplayer games that uses the theoretical framework of generative relations combined with a stochastic search method is presented. Generative relations provide a mean to compare two strategy profiles and to assess their relative quality with respect to an equilibria type.  
The stochastic method used, called Aumann Crowding Based Differential Evolution (A-CrDE), uses a Differential Evolution algorithm that has been successfully used for numerical optimization problem. Numerical experiments illustrate the efficiency of the approach. 
\end{abstract}

\begin{keyword}

Non-cooperative games \sep strong Nash (Aumann) equilibrium \sep ge\-ne\-ra\-tive re\-la\-tion \sep differential evolution
\end{keyword}

\end{frontmatter}

\section{Introduction}

\newtheorem{definition}{Definition}
\newtheorem{remark}{Remark}
\newtheorem{example}{Example}
\newtheorem{proposition}{Proposition}

Strong Nash equilibrium (SNE) or Aumann equilibrium is one of the most appealing equilibrium concepts in non-cooperative game theory \cite{GTintroducere,aum2,gintis2000game}. Proposed by Aumann \cite{aum} as an alternative to the Nash equilibrium (NE), SNEs  take into account the fact that some of the players, although having no unilateral incentive to deviate, may benefit (sometimes substantially)  from forming alliances/coalitions with other players. While in a NE no player can improve its payoff by unilateral deviation, in a SNE there is no coalition of players that can improve their payoffs (by collective deviation). Thus, SNEs present the advantages of a cooperative behavior in a non-cooperative environment. 

Two major downsides appear when dealing with SNEs:
\begin{itemize}[noitemsep,nolistsep]
\item SNEs need not exist for all games; however, this paper is concerned only with games that present at least one SNE;
\item the computational complexity related to the necessity of considering all possible  coalitions among players.
\end{itemize}

In spite of that,  the strong Nash equilibrium  is a robust, worth exploring equilibrium concept; the importance of SNEs is widely studied for classes of games that allow the characterization of SNEs, such as congestion games \cite{Holzman199785}, network games \cite{Holzman2003193, Matsubayashi2006387}, voting models  \cite{Keiding2001117,Moulin1982}, etc. 


Although the existence and properties of SNEs have been studied \cite{Nessah2014871}, there are few computational tools available for computing the SNEs. The complexity of computing a SNE is known to be $\mathcal{NP}-complete$ \cite{Conitzer2008621, DBLP:conf/atal/GattiRS13}. For pure strategy SNEs there are several algorithms designed for specific classes of games: congestion games \cite{0899.90169,
Hayrapetyan:2006:ECC:1132516.1132529, Rozenfeld:2006:SCS:2081411.2081419,
Hoefer:2010:CPN:1929237.1929264}, connection games \cite{Epstein200951}, continuous games \cite{Nessah2014871}. An algorithm for detecting strong Nash equilibrium  in bottleneck congestion games is described in \cite{harks}. Properties, existence conditions, and an analytical algorithm are described in \cite{Nessah2014871}.

The aim of this article is to compute SNEs using a heuristic search algorithm.  In order to accomplish that, a method to compare two strategy profiles with respect to the characteristics of SNEs and decide if one is ``better'' than the other is needed. Such a binary relation has been proposed in \cite{gasko1} and successfully used to approximate SNEs. This paper  studies theoretical aspects related to this relation and furthermore proposes two variants that are less computational consuming in terms of running time.

The rest of the article is organized as follows: Section \ref{sec:theory} presents some basic Game Theory notions (non-cooperative game, Nash equilibrium, strong Nash equilibrium). Section \ref{sec:generativerelations} describes the generative relations necessary for the equilibrium detection (the strong Nash, probabilistic strong Nash non-dominated relation) and Section \ref{sec:evo} the evolutionary approach. In Section \ref{sec:numericalexperiments} numerical experiments are presented. The paper ends with Conclusions.


\section{Strong Nash equilibria definitions}
\label{sec:theory}

A non-cooperative game is described by a system of players, actions and payoffs. Each player has some actions/strategies available and a payoff function that takes into account the actions of all players. 

Formally, a finite strategic non-cooperative game $\Gamma$ is a system $\Gamma=(N,S,U),$ where:

\begin{itemize}[noitemsep,nolistsep]
\item $N$ represents a set of players, and $n$ is the number of players;

\item for each player $i \in N$, $S_{i}$ is the set of actions available, and  $$S=S_{1} \times S_{2} \times ... \times S_{n}$$
is the set of all possible situations of the game.
An element $s \in S$, $s=(s_1,s_2,...,s_n)$, is called a strategy (or strategy profile) of the game with $s_i$ denoting the strategy of player $i$;

\item $U=(u_1,...,u_n)$ is the set of payoff functions; for each  $i \in N$, $u_{i}:S \rightarrow R$ represents the payoff function of player $i$.
\end{itemize}

Let $\mathcal P(N)$ be the power set of $N$, containing all possible player coalitions and $I$ a nonempty set of $\mathcal P(N)$. Then $N-I=\{i\in N; i\not\in I\}$ is the set of the rest of the players. If $I=\{i\}$, i.e. contains only one player, instead of $N-I$ we will write $-i$. Using these notations, if $s,q\in S$, $(s_I, q_{N-I})$ denotes the strategy  in which players from $I$ play their strategies from $s$ and players from $N-I$ their strategies from $q$. If $I=\{i\}$, $(s_i, q_{-i})=(q_1,...,q_{i-1},s_i,q_{i+1},...,q_n)$.

The Nash  equilibrium  \cite{nash} is a strategy profile such that no player can unilaterally change her/his strategy to increase her/his payoff.

\begin{definition}[Nash equilibrium]
A strategy profile $s^{*} \in S$ is a Nash equilibrium if the inequality
$$u_i(s_{i}^{},s_{-i}^{*}) \leq u_i(s^{*}),$$
 holds,  $\forall i=1,...,n, \forall s_{i} \in S_i.$
\end{definition}

A Pareto efficient (or optimal) strategy is a situation in which no player can improve his/her payoff without decreasing the payoff of someone else. 
\begin{definition}[Pareto efficiency]
A strategy profile $s^* \in S$ is Pareto efficient if there does not exist a strategy $s \in S$ such that $$u_i(s)\geq u_i(s^*), \forall  i \in N,$$
with at least one strict inequality.
\end{definition}


The strong Nash (or Aumann) equilibrium is a strategy for which no coalition of players has a joint deviation that improves the payoff of each member of the coalition.

\begin{definition}[Strong Nash equilibrium]
The strategy
$s^{*}$ is a strong Nash (Aumann) equilibrium  if 
$\forall I \subseteq N, I \neq \emptyset$ there does not exist any $s_{I}$ such that
the inequality
$$u_{i}(s_{I}^{},s^{*}_{N-I})> u_i(s^{*})$$
holds $\forall i\in I$.
\end{definition}

Let us denote by  $SNE(\Gamma)$ the set of strong Nash equilibria of the game $G$ and by  $NE(\Gamma)$ the set of Nash equilibria in the game $\Gamma$. 

The following remarks about SNEs are obvious from the definition.
\begin{remark}\label{rem:observatii}
\begin{itemize}[noitemsep]
\item Considering that if we choose deviating coalitions composed from a unique player it is clear that the strong Nash equilibrium reduces to the Nash equilibrium and we can write  $$SNE(\Gamma)\subseteq NE(\Gamma).$$ 
\item The definition of $SNE$ implies that any $SNE$ is Pareto efficient \cite{Nessah2013353}. Evenmore,  Nash equilibrium that is also Pareto efficient is a strong Nash equilibrium \cite{Gatti:2013:VCS:2484920.2485034}.
\item $SNE$ does not always exists in any non-cooperative games.
\end{itemize}
\end{remark}
\begin{example}\label{ex:ex1}
Let us consider a two  person coordination  game with payoffs presented in Table \ref{table:example1}.
The game has two NEs in pure form: $(A,A)$ and $(B,B)$, with the corresponding payoffs $(5,5)$ and $(4,4)$, and one NE in mixed form. Only the strategy profile $(A,A)$ is a strong Nash equilibrium. 
\begin{table}[h]
\caption{The payoff matrix for Example \ref{ex:ex1}}
\begin{center}
  \begin{tabular}{  l | c | c | c |}
  \multicolumn{4}{r}{Player 2} \\ \cline{2-4}
  & & A & B \\ \cline{2-4}
{Player 1}  & A & (5,5)  & (3,1)  \\  \cline{2-4}
          &  B & (2,3) & (4,4) \\ \cline{2-4}
  \end{tabular}
\end{center}
\label{table:example1}
\end{table}
\end{example}


\section{Generative relations}
\label{sec:generativerelations}

\subparagraph{Generative relations} are used to characterize a certain equilibrium type by using the non-dominance concept. A binary relation $R$ is defined on $S$. If we have $sRq$, with $s,q\in S$, then we say that $s$ dominates $q$ with respect to relation $R$. Conversely, if, for some $s$,  $\nexists q$ such that $qRs$, we call $s$ non-dominated with respect to relation $R$. 

Relation $R$ is called $generative$ for an equilibrium type if the set of nondominated strategy profiles with respect to relation $R$ equals the set of equilibria. 

A generative relation  for the Nash equilibrium was introduced in \cite{iccc2008}. Other generative relations were defined for modified strong Nash and coalition proof Nash equilibrium in \cite{MSE_CPN}, and for strong Berge equilibrium in \cite{SB}.
In what follows a generative relation for SNEs is presented.

\subsection{Generative relation for strong Nash equilibrium}
\label{sec_aumann}
In what follows we will assume that the considered game presents at least one strong Nash equilibrium. 

A relative quality measure of two strategies with respect to strong Nash equilibrium can be defined as   \cite{rel_gen_aum}:
$$a(s^*,s)=card[i \in I, \emptyset \neq I \subseteq N, u_{i}(s_{I}^{},s^{*}_{N-I})> u_{i}(s^{*}), s_{i}^{} \neq s^{*}_{i}],$$
where $card[M]$ denotes the cardinality of the multiset $M$ (an element $i$ can appear several times in $M$ and each occurrence is counted in $card[M]$). Thus, $a(s^*,s)$ counts the total number of players that would benefit from collectively  switching their strategies from $s^*$ to $s$.

\begin{definition}
Let $s^*,s \in S$.
We say that strategy $s^*$ is better than strategy $s$ with respect to strong Nash equilibrium (or strong Nash dominates strategy $s$), and we write $s^* \prec_{A} s $ if the following inequality holds:  $$a(s^*,s)<a(s,s^*).$$
\end{definition}

Thus, strategy $s^*$ is better in strong Nash sense than a strategy $s$ if there are less players that would be able to increase their payoffs by entering in a coalition that switches strategies from $s^*$ to $s$ than vice-versa.

\begin{definition}
The strategy profile $s^{*} \in S$ is called strong Nash non-dominated (ANS) if there is no strategy $s \in S, s \neq s^{*}$ such that: $$s\prec_{A} s^{*}.$$
\end{definition}

Our assumption is that $ \prec_{A} $ is a generative relation  for  strong Nash equilibria, i.e. the set of non-dominated strategies with respect to $\prec_{A}$ is equal to the set of \textit{strong Nash equilibria} of the game. In order to prove that, we will use the following property.

\begin{proposition}
\label{aum0}
A strategy profile $s^{*} \in S$ is a strong Nash equilibrium  if and only if the equality $$a(s^{*},s)=0$$
holds for all $s \in S$.
\end{proposition}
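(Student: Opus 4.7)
The plan is to prove the biconditional direction-by-direction, in each case by contrapositive. The common mechanism is that a strictly positive value of $a(s^*,s)$ corresponds to a coalition that can strictly improve every one of its members by collectively switching from $s^*$ to $s$, which is exactly what the SNE definition prohibits; making this correspondence precise on each side is the whole content of the proof.

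For the $(\Leftarrow)$ direction I would assume $s^*$ is not a strong Nash equilibrium and construct a witness $s \in S$ with $a(s^*,s)>0$. The SNE definition supplies a nonempty $I\subseteq N$ and a joint strategy $s'_I \in S_I$ with $u_i(s'_I,s^*_{N-I})>u_i(s^*)$ for every $i\in I$. I would first argue that, without loss of generality, $s'_i\neq s^*_i$ for all $i\in I$: any $i\in I$ with $s'_i=s^*_i$ can be removed from $I$ without altering either the realized profile $(s'_I,s^*_{N-I})$ or the payoffs of the surviving members, and $I$ cannot empty out since $s'_I = s^*_I$ would make all strict inequalities collapse. Setting $s:=(s'_I,s^*_{N-I})\in S$, every $i\in I$ then fulfils all the clauses tracked by the multiset defining $a(s^*,s)$: $i\in I$, $I$ nonempty, $s_i=s'_i\neq s^*_i$, and $u_i(s_I,s^*_{N-I}) = u_i(s'_I,s^*_{N-I})>u_i(s^*)$. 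Hence $a(s^*,s)\geq |I|>0$, contradicting the hypothesis $a(s^*,\cdot)\equiv 0$.

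For the $(\Rightarrow)$ direction I would assume $a(s^*,s)>0$ for some $s \in S$. The multiset is then nonempty and produces at least one contributing pair $(i,I)$ with $i \in I$, $s_i\neq s^*_i$, and $u_i(s_I,s^*_{N-I})>u_i(s^*)$. Reading this pair as in the motivating text, namely as a player $i$ gaining when the coalition $I$ \emph{collectively} switches to $s_I$, the pair exhibits a joint deviation of $I$ in which every participating member strictly improves, which contradicts $s^*$ being an SNE.

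The main obstacle I anticipate sits precisely in this last step. The multiset is written as a count of $(i,I)$-occurrences rather than of coalitions in which all members improve, so to finish the forward implication cleanly one must be sure that a single contributing $(i,I)$ really witnesses a coalition whose every member strictly improves. I would therefore pin down the intended coalition-level reading of the multiset at the outset, and, should only the weaker ``some $i \in I$ improves'' reading be available, insert an explicit sub-coalition extraction step: starting from an individually-improving $i$ in $I$, restrict to $I' = \{j \in I : u_j(s_I, s^*_{N-I}) > u_j(s^*),\ s_j \neq s^*_j\}$ and analyse whether a deviation confined to $I'$ keeps all remaining improvements, adjusting the argument (possibly via an extra assumption from Remark \ref{rem:observatii}) as needed before concluding.
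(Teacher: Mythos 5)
Your $(\Leftarrow)$ direction is complete and in fact more careful than the paper's part \textit{(ii)}: the paper simply asserts that $a(s^{*},\cdot)\equiv 0$ forces $u_i(s_I,s^{*}_{N-I})\le u_i(s^{*})$ for all $I$ and $i\in I$, whereas your reduction step (discarding members of $I$ with $s'_i=s^{*}_i$, noting that the realized profile and the surviving strict inequalities are unchanged, and that $I$ cannot empty out) is exactly what is needed to handle the clause $s_i\neq s^{*}_i$ in the multiset. That half of your argument matches and sharpens the paper's.

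The gap you flag in the $(\Rightarrow)$ direction is genuine, and your proposal does not close it; under the literal per-occurrence reading of the multiset the implication is in fact false. Take two players with $u(a,a)=(5,5)$, $u(b,b)=(10,0)$, $u(a,b)=u(b,a)=(0,0)$. Then $(a,a)$ is a strong Nash equilibrium, since no coalition has a deviation improving \emph{all} of its members, yet the occurrence $(i,I)=(1,\{1,2\})$ with $s=(b,b)$ satisfies every clause of the multiset, so $a\bigl((a,a),(b,b)\bigr)\ge 1$. Your fallback repair --- shrinking to the sub-coalition $I'$ of improving members --- fails because a deviation confined to $I'$ produces the profile $(s_{I'},s^{*}_{N-I'})$ rather than $(s_I,s^{*}_{N-I})$, so the recorded improvements need not persist. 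The only workable route is the coalition-level reading you mention first: $a(s^{*},s)$ must count occurrences $(i,I)$ only over coalitions $I$ in which \emph{every} member strictly improves (this is what the informal gloss ``players that would benefit from collectively switching'' intends), and under that convention the forward direction is immediate. Note that the paper's own proof of part \textit{(i)} steps over exactly the same point: it extracts a single improving $i\in I$ and declares a contradiction with the SNE definition, which in fact requires all of $I$ to improve. So your diagnosis is correct, but the proof is not finished until the definition of the multiset is pinned down in the coalition-level form.
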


\begin{proof}
\textit{(i)}
Let $s^{*} \in S$ be a SNE. Suppose there exists a strategy profile $s \in S$, such that $a(s^{*},s)=w$, $w >0$. Therefore there exists a set $I, I \subseteq N, I \neq \emptyset $, and $i \in I$, such that $$u_{i}(s_{I}^{},s^{*}_{N-I})> u_i(s^{*}), s_{I}^{}\neq s^{*}_{I}.$$ This contradicts the definition of SNE.

\textit{(ii)}
Let $s^{*} \in S$ be a strategy profile such that $$\forall s \in S, a(s^{*},s)=0.$$ This means that
$$u_{i}(s_{I}^{},s^{*}_{N-I}) \leq u_i(s^{*})$$
for all $I\subseteq N,$ $i \in I$, and for any strategy $s \in S.$
Therefore $s^{*}$ is strong Nash equilibrium.
\end{proof}

\begin{proposition}
\label{eq:aumann1}
All SNEs are strong Nash non-dominated solutions, i.e.
$$\textrm{SNE}\subseteq ANS.$$
\end{proposition}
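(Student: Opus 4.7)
The plan is to derive the inclusion directly from Proposition~\ref{aum0} by a short contradiction argument. Since Proposition~\ref{aum0} gives a clean characterization ($s^*$ is SNE iff $a(s^*,s)=0$ for every $s\in S$), the heavy lifting has already been done, and what remains is mostly a matter of unwinding definitions.

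First, I would fix an arbitrary $s^*\in\textrm{SNE}(\Gamma)$ and assume toward contradiction that $s^*\notin ANS$. By definition of $ANS$, there must then exist some $s\in S$ with $s\neq s^*$ such that $s\prec_A s^*$, which by the definition of $\prec_A$ means
\[
a(s,s^*) < a(s^*,s).
\]
Next I would apply Proposition~\ref{aum0}: because $s^*$ is a strong Nash equilibrium, $a(s^*,s)=0$ for every $s\in S$, and in particular for the candidate $s$ above. Substituting gives $a(s,s^*)<0$, which is impossible since $a(\cdot,\cdot)$ is defined as the cardinality of a (multi)set and is therefore a nonnegative integer. This contradiction yields that no such dominating $s$ exists, so $s^*$ is strong Nash non-dominated.

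There is no real obstacle here; the only thing to be careful about is making explicit the nonnegativity of $a$, which is what closes the contradiction. The argument uses Proposition~\ref{aum0} in an essential way, so if I were writing the paper I would cite it explicitly and note that the reverse inclusion $ANS\subseteq\textrm{SNE}$ (the harder direction) presumably needs a separate argument that exploits the assumption that the game admits at least one SNE.
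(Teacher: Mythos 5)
Your argument is correct and is essentially identical to the paper's own proof: both assume $s^*\in\textrm{SNE}$ is dominated by some $s$, deduce $a(s,s^*)<a(s^*,s)=0$ via Proposition~\ref{aum0}, and conclude by the nonnegativity of the multiset cardinality. No differences worth noting.
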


\begin{proof}
Let $s^{*} \in SNE$. Suppose $s^{*}$ is strong Nash dominated. Therefore there exists a strategy profile $s \in S$ dominating $s^*$:
$$s \prec_{A} s^*.$$
From the definition of relation $\prec_{A}$, we have $$a(s,s^{*})<a(s^{*},s),$$
and from Proposition \ref{aum0}: $$a(s^{*},s)=0.$$
Therefore $$a(s,s^{*})<0.$$ But this is impossible as $a(s,s^{*})$ denotes the cardinality of a multiset.
\end{proof}

\begin{proposition}
\label{eq:aumann2}
All strong Nash non-dominated solutions are strong Nash equilibria, i.e.
$$ANS \subseteq SNE.$$

\begin{proof}
Let $s^{*}$ be an strong Nash non-dominated strategy profile. Suppose $s^{*}\not\in SNE$. Therefore there must exist (at least one) non-empty coalition $J,$ $j \in J$ and a strategy $s_{J} \in S$, such that

\begin{equation}
\label{eq:eqaumann}
u_{j}(s_{J}^{},s^{*}_{N-J})> u_{j}(s^{*}), \forall j \in J.
\end{equation}

We consider the coalition fixed, i.e. $J=\{j_{i_1},j_{i_2},...,j_{i_k}\}.$
Let us denote  $q=(s_{J}^{},s^{*}_{N-J}).$ Eq. (\ref{eq:eqaumann}) can be written as:
\begin{equation}
\label{eq:eqaumann1}
u_{j}(q)> u_{j}(s^{*}), \forall j \in J.
\end{equation}
We have $$a(s^*,q)=card[i \in I, \emptyset \neq I \subseteq N, u_{i}(q_{I}^{},s^{*}_{N-I})> u_{i}(s^{*}), q_{i}^{} \neq s^{*}_{i}].$$
which is equivalent with:
$$a(s^*,q)=card[i \in I, \emptyset \neq I \subseteq N, u_{i}(s_{j}^{},s^{*}_{N-j})> u_{i}(s^{*}), s_j^{} \neq s^{*}_{i}].$$

From (\ref{eq:eqaumann1}) it follows that
\begin{equation}
\label{eq:eqaumann2}
a(s^{*},q) > 0.
\end{equation}
On the other hand we have:
$$a(q,s^*)=card[i \in I, \emptyset \neq I \subseteq N, u_{i}(s^{*}_{I},q_{N-I}^{})> u_{i}(q), s^{*}_{i} \neq q_{i}^{}, \forall i \in I].$$
But $q_{i}^{} \neq s^{*}_{i}$ holds only for $i=j,$ $j \in J.$
Hence
$$a(q,s^*)=card[j, u_{j}(s^{*}_{J},q_{N-J}^{})> u_{j}(q)].$$
But
$$(s^{*}_{J},q_{N-J}^{})=s^*.$$
Thus $a(q,s^*)$ can be written as follows:
$$a(q,s^*)=card[j, u_{j}(s^{*})> u_{j}(q)].$$
From (\ref{eq:eqaumann1}) it results, that:
\begin{equation}
\label{eq:eqaumann3}
a(q,s^*)=0.
\end{equation}
From (\ref{eq:eqaumann2}) and (\ref{eq:eqaumann3}) we have:
$$a(q,s^{*})<a(s^{*},q),$$ which means that $q \prec_{A} s^{*}.$ The hypothesis that $s^{*}$ is non-dominated is thus contradicted.
\end{proof}
\end{proposition}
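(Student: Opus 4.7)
The plan is to prove the contrapositive: assume $s^{*}\notin SNE$ and construct a strategy $q\in S$ with $q\prec_{A}s^{*}$, thereby showing $s^{*}\notin ANS$.

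First I would unpack what it means for $s^{*}$ to fail to be a strong Nash equilibrium. By the definition of SNE, there must exist a non-empty coalition $J\subseteq N$ and a strategy profile $s$ such that $u_{j}(s_{J},s^{*}_{N-J})>u_{j}(s^{*})$ for every $j\in J$. I would observe that we may assume without loss of generality $s_{j}\neq s^{*}_{j}$ for every $j\in J$: any coordinate with $s_{j}=s^{*}_{j}$ can be removed from $J$ without changing the profile $(s_{J},s^{*}_{N-J})$ or the payoff inequalities. This normalization is what makes the cardinality bookkeeping in the definition of $a(\cdot,\cdot)$ clean.

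Next I would take the natural dominator $q:=(s_{J},s^{*}_{N-J})$ and compute both $a(s^{*},q)$ and $a(q,s^{*})$. For the first, each $j\in J$ with coalition $I=J$ contributes a pair: $q_{j}=s_{j}\neq s^{*}_{j}$ and $u_{j}(q_{J},s^{*}_{N-J})=u_{j}(q)>u_{j}(s^{*})$. Hence $a(s^{*},q)\geq|J|>0$. For the second, the condition $s^{*}_{i}\neq q_{i}$ forces $i\in J$, and the key algebraic simplification is that $q$ and $s^{*}$ coincide on $N-J$, so any profile of the form $(s^{*}_{I},q_{N-I})$ collapses to one where only the coordinates in $J\setminus I$ carry the $s$-values. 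In particular, when the coalition $I$ contains $J$, the profile $(s^{*}_{I},q_{N-I})$ reduces to $s^{*}$ itself, and the required strict inequality $u_{i}(s^{*})>u_{i}(q)$ for $i\in J$ directly contradicts the deviation hypothesis.

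Once both computations are in place, $a(q,s^{*})=0<a(s^{*},q)$ yields $q\prec_{A}s^{*}$, contradicting $s^{*}\in ANS$, and the inclusion $ANS\subseteq SNE$ follows. The main obstacle I anticipate is in the second cardinality computation: showing rigorously that no pair $(i,I)$ survives. Here the interpretation of the multiset in the definition of $a$ matters — if $I$ is allowed to range over all coalitions containing $i$, one must verify that partial coalitions $I$ with $\emptyset\subsetneq I\cap J\subsetneq J$ cannot produce a strict improvement $u_{i}(s_{J\setminus I},s^{*}_{N-(J\setminus I)})>u_{i}(q)$. This is where care is required, and if the argument cannot be closed for an arbitrary choice of $J$, one can strengthen the construction by picking $J$ maximally (for instance, as a coalition maximizing $|J|$ among all beneficial deviations from $s^{*}$), so that any alleged partial-deviation improvement would yield a contradictory enlargement of the deviating coalition.
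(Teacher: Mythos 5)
Your construction is the same as the paper's: take the profitable deviation $q=(s_J,s^*_{N-J})$, show $a(s^*,q)>0$ and $a(q,s^*)=0$, and conclude $q\prec_{A}s^*$. The first computation is fine, and your normalization $s_j\neq s^*_j$ for all $j\in J$ is a sensible tidying-up that the paper omits. The genuine gap is exactly where you place it, and you do not close it: the claim $a(q,s^*)=0$. Concretely, a pair $(i,I)$ with $i\in I\cap J$, $I\neq\emptyset$, $I\not\supseteq J$ contributes to $a(q,s^*)$ whenever $u_i(s^*_I,q_{N-I})>u_i(q)$; since $q$ and $s^*$ agree off $J$, the profile $(s^*_I,q_{N-I})$ equals $(s_{J\setminus I},\,s^*_{N-(J\setminus I)})$, a partial reversion that is in general different from $s^*$, and nothing forbids a member of $I\cap J$ from strictly preferring it to $q$. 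Your proposed repair --- choosing $J$ maximal among profitable deviating coalitions --- does not obviously rule this out either, because the problematic coalition $J\setminus I$ is \emph{smaller} than $J$, so no ``contradictory enlargement'' of the deviating coalition arises; you would need a genuinely different selection argument (or settle for the weaker inequality $a(q,s^*)<a(s^*,q)$, which is all that $q\prec_{A}s^*$ requires, but which you also do not establish).

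For what it is worth, the paper's own proof makes precisely the leap you flag: from ``$q_i\neq s^*_i$ holds only for $i\in J$'' it passes directly to counting only the single coalition $I=J$, for which $(s^*_J,q_{N-J})=s^*$ collapses the condition to $u_j(s^*)>u_j(q)$, and silently discards every other coalition meeting $J$. So your instinct about where the difficulty lies is sound, and your write-up is more honest than the paper about what remains to be shown --- but as submitted the proof is incomplete at its decisive step.
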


\begin{proposition}\label{prop:totala}
Relation $\prec_{A}$ is a generative relation for strong Nash equilibria, i.e. $$ SNE=ANS.$$
\begin{proof}
Follows directly from Proposition \ref{eq:aumann1} and Proposition \ref{eq:aumann2}.
\end{proof}
\end{proposition}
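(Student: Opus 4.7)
The plan is to establish the set equality $SNE = ANS$ by mutual inclusion, and at this point both inclusions are already in hand. First I would invoke Proposition \ref{eq:aumann1}, which gives $SNE \subseteq ANS$: every strong Nash equilibrium $s^{*}$ satisfies $a(s^{*}, s) = 0$ for all $s$ by Proposition \ref{aum0}, so no $s$ can witness $s \prec_A s^{*}$ since that would require the cardinality $a(s, s^{*})$ to be strictly negative. Then I would invoke Proposition \ref{eq:aumann2} for the reverse inclusion $ANS \subseteq SNE$: given any $\prec_A$-non-dominated $s^{*}$, a hypothetical deviating coalition $J$ produces $q = (s_J, s^{*}_{N-J})$ with $a(s^{*}, q) > 0$ and $a(q, s^{*}) = 0$, contradicting non-dominance. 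Combining the two inclusions yields the claim.

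Because the two inclusions have been separately proven, this proposition is essentially a corollary and I would not expect any new obstacle: the proof reduces to a single sentence citing Propositions \ref{eq:aumann1} and \ref{eq:aumann2}. If anything subtle remained, it would be ensuring that the notion of non-domination used to define $ANS$ exactly matches the one used in the two prior results (i.e., strict versus non-strict dominance, and the convention $s \neq s^{*}$), but inspection of the definitions shows these are consistent. Hence the proof is simply the chain $SNE \subseteq ANS \subseteq SNE$, giving $SNE = ANS$.
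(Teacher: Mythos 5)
Your proposal is correct and matches the paper's proof exactly: the proposition is a direct corollary of Proposition \ref{eq:aumann1} ($SNE \subseteq ANS$) and Proposition \ref{eq:aumann2} ($ANS \subseteq SNE$), and your recap of those two arguments is faithful to how they are proved. Nothing further is needed.
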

Although $\prec_A$ is a generative relation for SNEs, it presents the major disadvantage that up to $n2^{n}$ payoff function evaluations are necessary with the corresponding computation of all $2^n-1$ possible coalitions. Because this makes $\prec_A$ impractical from a computational point of view when dealing with a large number of players, two alternatives are proposed in what follows. 

\subsection{Probabilistic generative relation for strong Nash equilibrium}
\label{sec_aumann_prob}

In order to reduce the number of evaluations a probabilistic model  that only takes into account some randomly generated coalitions is proposed.  

In the case of $n$ players the total number of possible coalitions is $2^n-1$. Consider a percent $p$ and $\mathcal A_p\subset \mathcal P(N)$ be a set of nonempty subsets of $N$ (possible coalitions) such that $card \{\mathcal A_p \}=[p (2^n-1)]$, where $[\cdot]$ denotes the integer part. The relative quality measure of two strategy profiles $s$ and $s^*$ with respect to $\mathcal A_p$ is:
$$a_p(s^*,s)=card[i \in I, \emptyset \neq I \subseteq \mathcal A_p, u_{i}(s_{I}^{},s^{*}_{N-I})> u_{i}(s^{*}), s_{i}^{} \neq s^{*}_{i}],$$

\begin{definition}
Let $s^*,s \in S$.
Strategy $s^*$ is $\mathcal A_p$-better than strategy $s$ with respect to strong Nash equilibrium (or $s^*$ probabilistic strong Nash dominates strategy $s$), and we write $s^* \prec_{A_p} s $ if the following inequality holds:  $$a_p(s^*,s)<a_p(s,s^*).$$
\end{definition}
 
Obviously, if $p=100\%$ the $\prec_{A_p}$ is identical to $\prec_A$.
\begin{definition}\label{def:pans}
For a given $p\neq 0$, strategy $s^{*} \in S$ is called $p$-strong Nash non-dominated ($pANS$) if there does not exists any $s \in S, s \neq s^{*}$ such that: $$s\prec_{A_p} s^{*}$$ for any $\mathcal A_p \subset \mathcal P(N)$ with $card \{\mathcal A_p \}=[p(2^n-1)]\neq 0$.
\end{definition}

In the following we will show that $ \prec_{A_p} $ is also a generative relation  for strong Nash equilibria, i.e. the set of $p -$non-dominated strategies with respect to $\prec_{A_p}$ approximates the set of \textit{strong Nash equilibria} of the game. 

\begin{proposition}
 For any $p\neq 0$, $\prec_{A_p}$ is a generative relation for strong Nash equilibria, i.e. $SE=pANS$.
 \begin{proof}
  The first implication is obvious (all SNEs are $p$-strong Nash non-domi\-nated), with the proof analogous  with that of Prop. \ref{eq:aumann1}.

 For the second one, $pANS \subseteq SE$, it is enough,  based on Prop. \ref{prop:totala}, to show that $pANS\subseteq ANS$. 
 
 Consider $s\in pANS$ such that $s\not \in ANS$. If $s\not \in ANS$, there exists $q\in S$ such that $q\prec_A s$, a.i. $$a(q,s)<a(s,q).$$
 
 Let $\{I_k\}_{k=1,...,m_q}$ be the family of $m_q$ coalitions such that $$u_i(q_{I_k},s_{N-I_k})>u _i(s)$$ and denote by $m=[p(2^n-1)]$, $m\neq 0$. 
 
 If $m>m_q$ then construct a family $\mathcal A_p$ by including all $I_k$ and any other coalitions. If $m\leq m_q$ then construct $\mathcal A_p$ consisting of only coalitions $I_k$, such that at least one  for which the relation $$u_i(s_I, q_{N-I})>u_i(q) \quad \forall i\in I$$ is not satisfied is included. Such a coalition exists, otherwise $q$ would not strong Nash dominate $s$. 
 
 Then, for $\mathcal A_p$, we can write: $$a_p(q,s)<a_p(s,q)$$ contradicting the hypothesis that $s\in pANS$. 
 
 \end{proof}

\end{proposition}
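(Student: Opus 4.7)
The plan is to prove $SNE=pANS$ by establishing the two inclusions separately.

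For $SNE\subseteq pANS$, I would mirror the argument of Prop.~\ref{eq:aumann1}. If $s^{*}\in SNE$, Prop.~\ref{aum0} gives $a(s^{*},s)=0$ for every $s\in S$. Since $\mathcal A_{p}\subseteq\mathcal P(N)$, the restricted count satisfies $a_{p}(s^{*},s)\le a(s^{*},s)=0$, so $a_{p}(s^{*},s)=0$ for every $s$ and every admissible $\mathcal A_{p}$. The strict inequality $a_{p}(s,s^{*})<a_{p}(s^{*},s)=0$ is therefore impossible (a cardinality cannot be negative), and $s^{*}\in pANS$.

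For $pANS\subseteq SNE$, I would invoke Prop.~\ref{prop:totala} and prove the equivalent $pANS\subseteq ANS$. Argue by contrapositive: suppose $s\in pANS$ but $s\notin ANS$, and pick $q$ with $a(q,s)<a(s,q)$. Enumerate the coalitions contributing to $a(s,q)$ as $\{I_{k}\}_{k=1,\dots,m_{q}}$ and let $c_{k}$ (resp.\ $d_{k}$) denote the contribution of $I_{k}$ to $a(s,q)$ (resp.\ $a(q,s)$). With $m=[p(2^{n}-1)]\ge 1$, the task is to build $\mathcal A_{p}$ of size $m$ satisfying $a_{p}(q,s)<a_{p}(s,q)$, which contradicts $s\in pANS$.

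The construction splits on whether $m>m_{q}$ or $m\le m_{q}$. In the first case, take $\mathcal A_{p}=\{I_{1},\dots,I_{m_{q}}\}$ together with any $m-m_{q}$ extra coalitions; by the defining property of $\{I_{k}\}$ these extras contribute nothing to $a_{p}(s,q)$, so $a_{p}(s,q)=a(s,q)>a(q,s)\ge a_{p}(q,s)$ and we are done. In the second case, use the estimate $\sum_{k}(c_{k}-d_{k})\ge a(s,q)-a(q,s)>0$, which holds because coalitions outside $\{I_{k}\}$ contribute $0$ to $a(s,q)$ and only non-negatively to $a(q,s)$, forcing $\sum_{k}d_{k}\le a(q,s)$. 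Sorting $\{I_{k}\}$ by decreasing $(c_{k}-d_{k})$ and taking the top $m$ as $\mathcal A_{p}$ preserves a strictly positive partial sum, by the standard averaging fact that the $m$ largest terms of a positive-total finite sequence themselves sum to a positive value.

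The main obstacle is precisely this $m\le m_{q}$ case: after trimming the witnessing family from size $m_{q}$ down to $m$, the strict contribution asymmetry $a_{p}(s,q)>a_{p}(q,s)$ must be shown to persist. The averaging/extremality argument sketched above handles it, while the remainder of the proof is careful accounting of which coalitions contribute to which count.
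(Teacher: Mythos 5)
Your proof is correct and follows the same skeleton as the paper's: the inclusion $SNE\subseteq pANS$ via $a_p(s^*,s)\le a(s^*,s)=0$, and the reduction of $pANS\subseteq SNE$ to $pANS\subseteq ANS$ using Proposition \ref{prop:totala}, with the same case split on $m$ versus $m_q$. Where you genuinely diverge is the crucial case $m\le m_q$: the paper selects $\mathcal A_p$ by requiring that it contain at least one coalition $I_k$ for which the reverse improvement $u_i(s_{I_k},q_{N-I_k})>u_i(q)$ fails for some member, and asserts that this yields $a_p(q,s)<a_p(s,q)$. That criterion alone does not control the totals: a selected subfamily could still have $\sum_k d_k\ge\sum_k c_k$ (e.g.\ one coalition with $c=1$, $d=0$ alongside another with $c=1$, $d=3$), so the paper's step is at best incomplete. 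Your argument instead works with the per-coalition net contributions $c_k-d_k$, observes that $\sum_k(c_k-d_k)\ge a(s,q)-a(q,s)>0$ because the $I_k$ exhaust the contributions to $a(s,q)$ while contributing at most $a(q,s)$ to the other count, and then takes the $m$ largest terms; the averaging inequality $\frac{1}{m}\sum_{k\le m}t_k\ge\frac{1}{m_q}\sum_{k\le m_q}t_k>0$ for decreasingly sorted $t_k$ guarantees the partial sum stays strictly positive. This buys a complete, verifiable selection rule and effectively repairs the gap in the paper's construction, at the modest cost of introducing the explicit bookkeeping of the contributions $c_k$ and $d_k$.
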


The probabilistic relation defined above presents the advantage of requiring less payoff function evaluations than $\prec_A$. 

%
%
%

 
\section{Evolutionary approach}
\label{sec:evo}

Games - in which players try to simultaneously maximize their own payoffs - are similar with  multi-objective optimization problems (MOPs) in many features.  Population based metaheuristics that can deal with MOPs and are capable of  finding of the Pareto optimal set can easily be adapted, by using an appropriate generative relation, to compute certain game equilibrium types. 

In this section a new evolutionary algorithm, based on the Crowding based Differential Evolution algorithm for multimodal optimization \cite{thomsen04}, called strong Nash Crowding Differential Evolution  Algorithm (A-CrDE) is presented. A-CrDE uses the generative relations defined in Section \ref{sec:generativerelations} to evolutionary compute the strong Nash equilibria of a game. 

\paragraph{A-CrDE population}

The individuals from  population $P$ represent strategy profiles of the game ($s=(s_1,s_2,...,s_n)$, $n$ is the number of players) that are randomly initialized in the first generation. 

\paragraph{Crowding Differential Evolution}

CrDE  extends the Differential Evolution (DE) algorithm  with a crowding scheme \cite{thomsen04}. CrDE is based on the conventional DE, the only modification is made regarding the individual (parent) being replaced. Usually, the parent producing the offspring is substituted, whereas in CrDE the offspring replaces the most similar individual among  the population if it Aumann dominates it. A \emph{DE/rand/1/exp} scheme is used, as described in Algorithm \ref{alg:crde}. 

\begin{algorithm}
\caption{CrDE - the \emph{DE/rand/1/exp} scheme }\label{alg:crde}
\textbf{ \textit{Create offspring} $O[l]$ from parent $P [l]$ } 
\begin{algorithmic}[1]

\STATE $O[l] = P [l]$ 
\STATE randomly select parents $P [i_1 ]$, $P [i_2 ]$, $P [i_3 ]$, where $i_1 \neq i_2 \neq i_3 \neq i$
\STATE $n = U (0, dim)$
\FOR{$j=0$; $j < dim \wedge U (0, 1) < pc$; $j=j+1$}
\STATE $O[l][n] = P [i_1 ][n] + F \ast (P [i_2 ][n] - P [i_3 ][n])$
\STATE $n = (n + 1) \bmod{dim}$
\ENDFOR
\end{algorithmic}
\end{algorithm}

While a final condition is not fulfilled (for example the current
number of fitness evaluations performed is below the maximum number of evaluations allowed), for each individual $l$ from the population, an offspring $O[l]$ is created using the scheme presented in Algorithm \ref{alg:crde}, where $U (0, x)$ is a uniformly distributed number between $0$ and $x$, $pc$ denotes the probability of crossover, $F$ is the scaling factor, and $dim$ is the number of problem parameters (problem dimensionality, number of players in this case).

\paragraph{The generative relation}

Within CrDE the offspring $O[l]$ replaces the most similar parent $P[j]$ if it is fitter. Otherwise, the parent survives and is passed on to the next generation (iteration of the algorithm). Euclidean distance, or any other similarity measure can be used. Within A-CrDE an offspring replaces the parent if it is better than it with respect to the strong Nash equilibrium. Three variants of A-CrDE are considered, each one using one of the generative relations presented in Section \ref{sec:generativerelations}.

\paragraph{Outline of A-CrDE} 

The A-CrDE algorithm is outlined in Algorithm \ref{alg:crdemare}. The output of the algorithm consists on the set of strong Nash nondominated solutions in the last iteration that approximates the set of strong Nash equilibria of the game.

\begin{algorithm}
\caption{A-CrDE  }\label{alg:crdemare}
\begin{algorithmic}[1]
\STATE Randomly generate initial population $P_0$ of strategies;
\WHILE{(not termination condition)} 
\FOR{each $l=\{1,...,population\: size\}$} 
\STATE create offspring $O[l]$ from parent $l$;
\IF {$O[l]$ \underline{\textit{strong Nash dominates}} the most \underline{similar} parent $j$} 
\STATE $O[l]$ replaces parent $j$;
\ENDIF
\ENDFOR
\ENDWHILE
\end{algorithmic}
\end{algorithm}

\section{Numerical experiments}
\label{sec:numericalexperiments}
The detection of strong Nash equilibria using A-CrDE is illustrated for two games that present a SNE. Reported results are averaged over ten runs. Parameter settings used for numerical experiments are presented in Table \ref{tab:par}. Each run, the algorithm reports the distance to the real value of the SNE of the game. 

All of the experiments were run on a computer with 3.07 GHz CPU and 12 GB main memory.

Two different variants of A-CrDE that use:
\begin{itemize}
\item the generative relation proposed in \ref{sec_aumann} - A-CrDE;
\item the probabilistic generative relation proposed in \ref{sec_aumann_prob} - $p$A-CrDE, with $p$ taking values 10\%, 20\%, 30\%, and 40\%;
\end{itemize}
are studied.

\begin{table}
\begin{center}
 \caption{Parameter settings for A-CrDE used for the numerical experiments}
 \label{tab:par}
{\newcommand{\mc}[3]{\multicolumn{#1}{#2}{#3}}
\begin{tabular}{lcccc}\hline
Parameter & 2 & 3 & 4 & 5 \\ \hline
Pop size & \mc{4}{c}{50}\\
Max no evaluations (coalitional evaluations) & \mc{4}{c}{$2\times 10^9$}\\
 scaling factor F & \mc{4}{c}{0.5}\\
Crossover rate & \mc{4}{c}{0.9} \\ \hline
\end{tabular}
}%
\end{center}
\end{table}

\subsection{Game 1} 

The following continuous two person game \cite{nessah_tian}:
$$u_1(s_1,s_2)=3s_1^{2}-s_2^2+4s_2,$$
$$u_2(s_1,s_2)=-s_1^{2}+s_1-2s_2, $$
where $$s_i\in[-1,1],i=1,2,$$
presents one strong Nash equilibrium, the strategy profile $(1,-1)$ \cite{nessah}, located on the Pareto frontier of the game.

The algorithm correctly computes the strong Nash equilibrium (with all of the three proposed generative relations, in all runs). Figure \ref{fig:game2_payoff} pre\-sents the strong Nash equilibrium detected by A-CrDE for Game 1.


\begin{figure}
\centering
\includegraphics[scale=0.75]{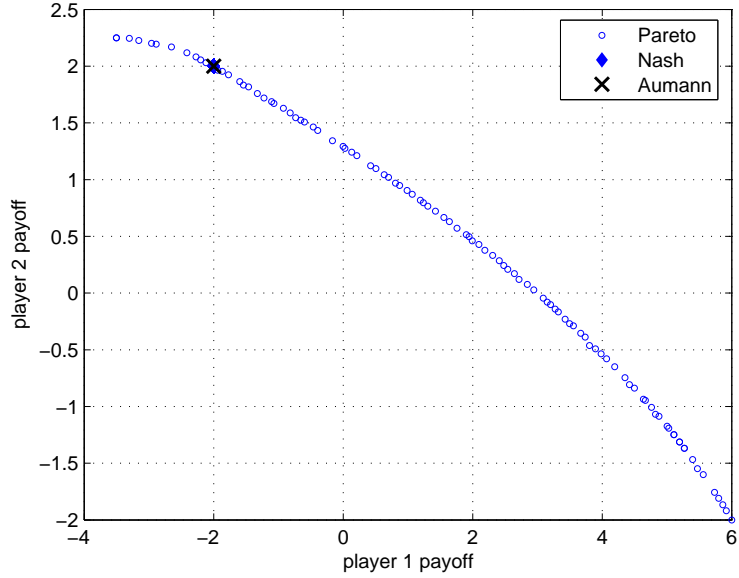}
\caption{Game 1. Strong Nash equilibrium detected by A-CrDE. Game 1 has only one NE that is also a SNE.}
\label{fig:game2_payoff}
\end{figure}






%

%
%
\subsection{Game 2: The minimum effort coordination game}

Game $G_2$ is based on a micro-foundation model \cite{bryant}. Another version of this game is presented in  \cite{Anderson2001177}. 

Consider a $n$-person  coordination game in which each player $i$ chooses an effort level $s_i$.  The common part of the effort is determined by the minimum effort of the $n$ effort levels. 
Each players' payoff is equal to the difference between the common payoff and the cost of the players own payoff ($\alpha s_i$):

$$u_i(s)=c_i-\alpha s_i,$$ where $$c_i=\min\{s_1,...,s_n\},$$

and $\alpha<1, s_i \in [0,10],i=1,...,n.$

Consider the cost $\alpha=0.5$.

The game has an infinite number of Nash equilibria (each $s_i=s$, $i=1,...,n$, $s,s_i \in [0,10]$ is a Nash equilibrium of the game), so all same effort level is a Nash equilibrium. The game has only one strong Nash equilibrium ($s_i=10$, $i=1,...,n$), which is Pareto efficient.

The objective space for the two player version is illustrated in \ref{fig:example2}.
\begin{figure}[H]\center
\includegraphics[width=0.6\linewidth]{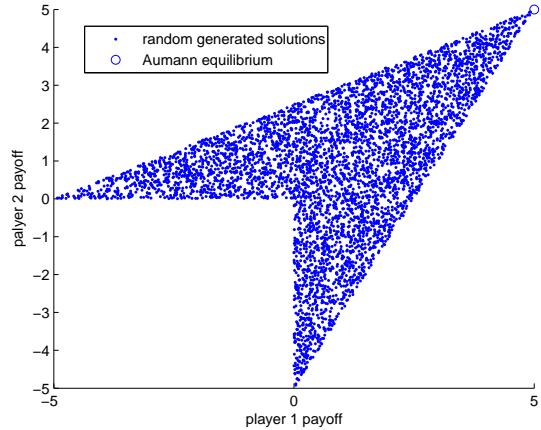}
\caption{Payoffs for the minimum effort coordination game for randomly generated strategies}
\label{fig:example2}
\end{figure}

Table \ref{table:sum} presents the distance to the strong Nash equilibrium and standard deviation for 10 different runs for A-CrDE and $p$A-CrDE. For 2, 5 and 10 players the algorithm finds correctly the strong Nash equilibrium. For 15 players a larger number of payoff function evaluations are necessary. 
\begin{table}
\centering
\small
\caption{Average distance to the strong Nash equilibrium and standard deviation for 10 independent runs}
\begin{tabular}{cccccc}\hline
{No. of pl.} &{A-CrDE} &\multicolumn{4}{c}{$p$A-CrDE}\\ \hline
 & & 10\% & 20\% & 30\% & 40\% \\ \hline
2 & $0 \pm 0$ & -  & - & - & $0 \pm 0$\\
5 & $0 \pm 0$ & $0 \pm 0$ &$0 \pm 0$ & $0 \pm 0$ & $0 \pm 0$\\
10 & $0 \pm 0$ & $0 \pm 0$ & $0 \pm 0$  & $0 \pm 0$&$0 \pm 0$\\
15 & $5.34 \pm 3.63$ &$2.76 \pm 8.30$ & $6.19 \pm 11.11$  & $ 8.84 \pm 14.52$  & $6.45 \pm 10.74$\\
\end{tabular}

\label{table:sum}
\end{table}

\begin{table}
\centering
\caption{Average run time (CPU seconds) for strong Nash equilibrium detection with different generative relations (10 independent runs)}
\small
\begin{tabular}{cccccc}\hline
{No. of pl.} &{A-CrDE} &\multicolumn{4}{c}{$p$A-CrDE}\\ \hline
 & & 10\% & 20\% & 30\% & 40\% \\ \hline
2 & 0.06 & -  & - & - & 0.06\\
5 & 0.1 &0.06 &0.06 &0.07& 0.08\\
10 & 5.82 & 0.73 & 1.25  & 2.05&2.53\\
15 & 328.66 &85.11 & 143.6  & 206.43  & 229.05\\
\end{tabular}

\label{table:time}
\end{table}

Table \ref{table:time} presents the average  time necessary to correctly compute the strong Nash equilibrium a single run (with distance 0). The results confirm that the algorithms are capable to locate the strong Nash equilibria even for 15 players, but they also indicate an exponential increase of the running time with the number of players.

To illustrate the evolution of the search, detailed results obtained for five players are depicted in Figures \ref{fig:game_convergence5} and \ref{fig:game_boxplot5}. Figure \ref{fig:game_convergence5} illustrates the convergence to the Aumann equilibrium for A-CrDE and $p$A-CrDE. Figure \ref{fig:game_boxplot5} depicts boxplots for the five cases, in each case the generation number is reported (for ten runs), in which the correct strong Nash equilibrium is obtained. The same type of results obtained for ten players are depicted Figures \ref{fig:game_convergence10} and \ref{fig:game_boxplot10}. The same conclusions can be drawn for both cases: in terms of payoff function evaluations, $p$A-CrDE seems to converge fastest for small values of $p$; A-CrDE has the slowest convergence. Boxplots representing the number of generations necessary to converge indicate that there are no significant differences between methods. A Wilcoxon sum rank test for assessing the statistical difference between means confirms this assumption.

\begin{figure}
\centering
\includegraphics[scale=0.65]{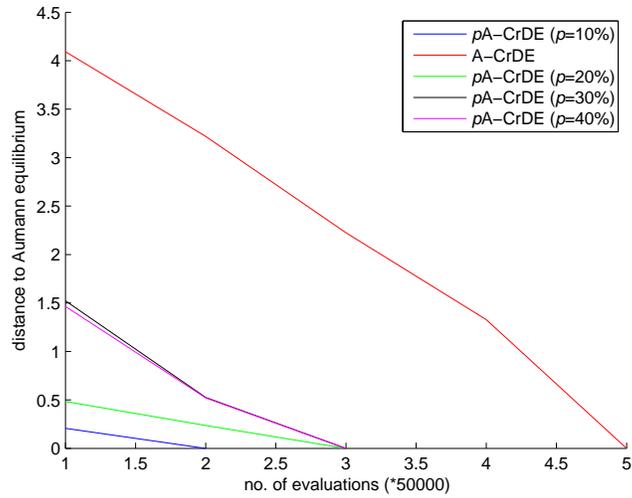}
\caption{Converge to the strong Nash equilibrium for 5 players. Values are smoothed using a moving average filter (with the MATLAB smooth function)}
\label{fig:game_convergence5}
\end{figure}

\begin{figure}
\centering
\includegraphics[scale=0.55]{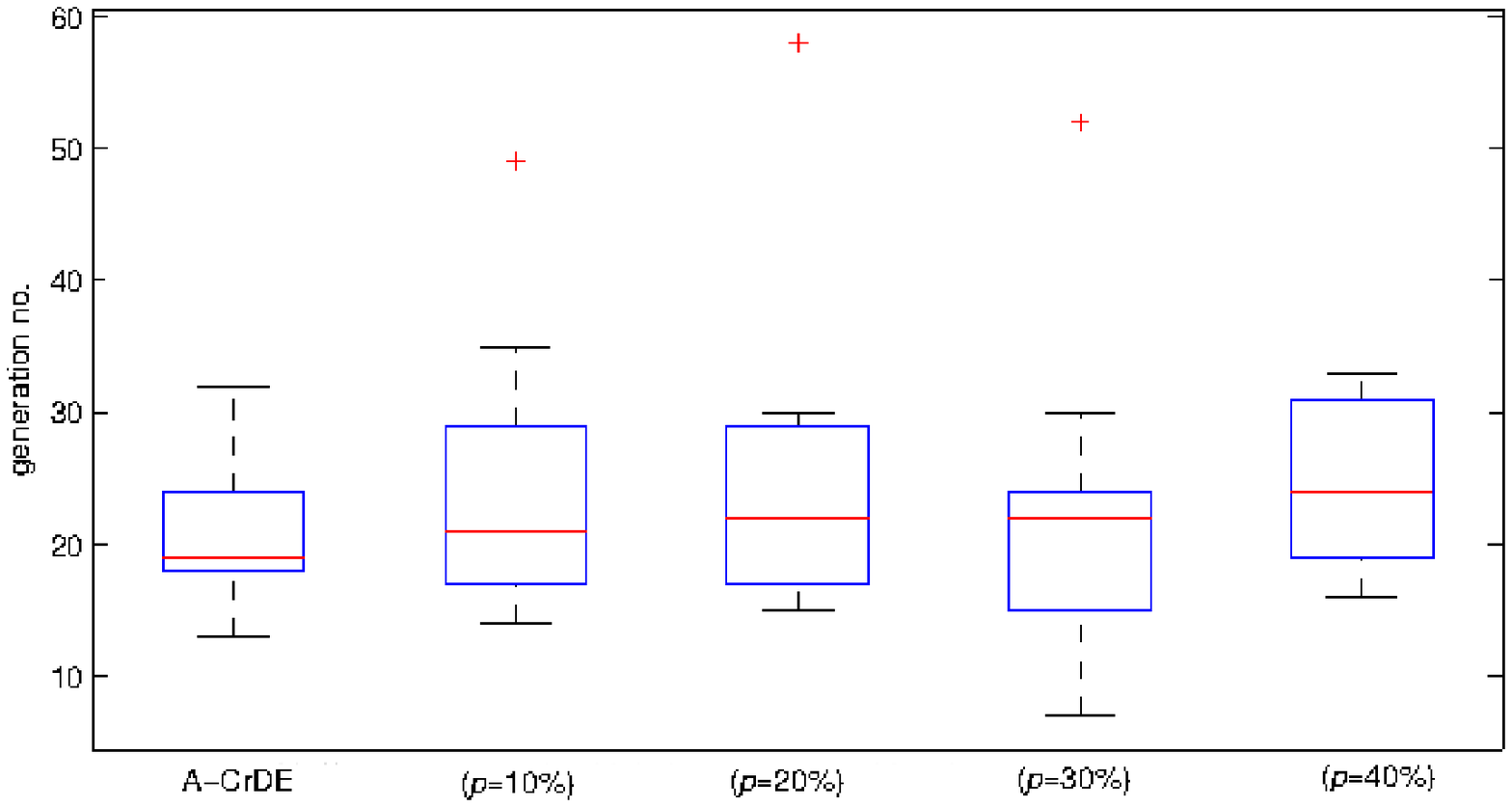}
\caption{Converge to the strong Nash equilibrium for five players (number of generations in 10 runs) }
\label{fig:game_boxplot5}
\end{figure}

\begin{figure}
\centering
\includegraphics[scale=0.65]{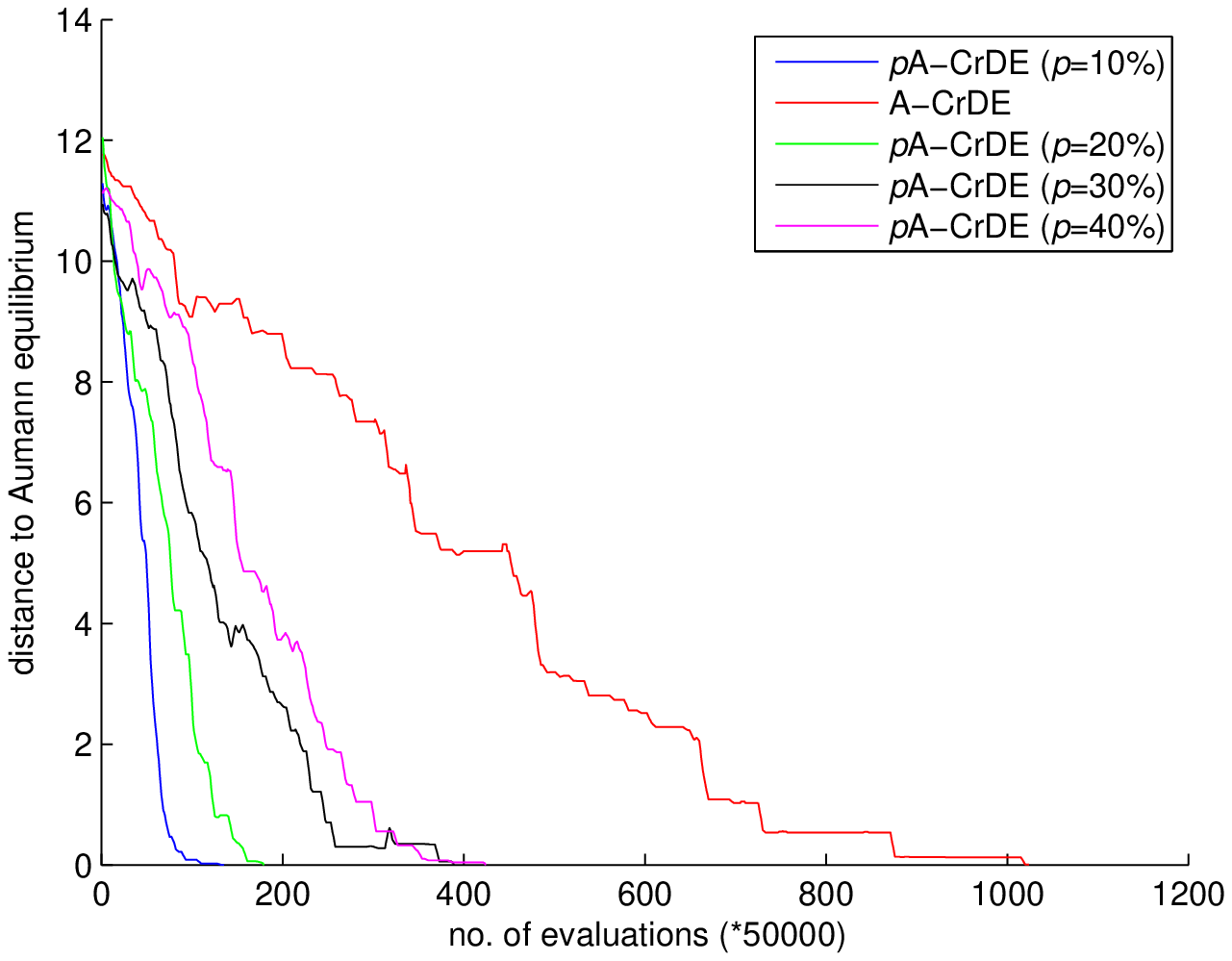}
\caption{Converge to the strong Nash equilibrium for ten players. The values are smoothed using a moving average filter (with the MATLAB smooth function)}
\label{fig:game_convergence10}
\end{figure}

\begin{figure}
\centering
\includegraphics[scale=0.55]{}
\caption{Converge to the strong Nash equilibrium for ten players (number of generations in 10 runs) }
\label{fig:game_boxplot10}
\end{figure}

%
%

\section{Conclusions}

This paper presents an efficient approach to the problem of computing strong Nash equilibria by using evolutionary computation and generative relations. A theoretical framework presenting two generative relations and an empirically designed sets the basis for the evolutionary method. 

A differential evolution algorithm is adapted to search for SNEs by simply adding a generative relation in the replacing procedure of the method. 

Numerical examples illustrate the efficiency of the approach. For the minimum effort game, the third variant of the method correctly computes the SNE for instances up to 150 players. 

%
%
%





\bibliographystyle{elsarticle-num-names}

\bibliography{Nomibibliography,bib}

\end{document}